\newtheorem{theorem}{Theorem}
\newtheorem{corollary}[theorem]{Corollary}
\newtheorem{lemma}[theorem]{Lemma}
\newtheorem{remark}[theorem]{Remark}
\newcommand*{\pd}
[2]{\mathchoice{\frac{\partial#1}{\partial#2}}
  {\partial#1/\partial#2}{\partial#1/\partial#2}
  {\partial#1/\partial#2}}
\newcommand*{\fd}
[2]{\mathchoice{\frac{\delta#1}{\delta#2}}
  {\delta #1/\delta#2}{\delta#1/\delta#2}{\delta#1/\delta#2}}
\title{\textbf{Remarks on\\ the Lagrangian representation\\
of bi-Hamiltonian equations}}
\author{
M.V. Pavlov$^{1,2,3}$ \\
$^{1}$Sector of Mathematical Physics,\\
Lebedev Physical Institute of Russian Academy of Sciences,\\
Leninskij Prospekt 53, 119991 Moscow, Russia\\
$^{2}$Department of Applied Mathematics,\\
National Research Nuclear University MEPHI,\\
Kashirskoe Shosse 31, 115409 Moscow, Russia\\
$^{3}$Novosibirsk State University,\\
2 Pirogova street, 630090 Novosibirsk, Russia\\
\url{maksmath@gmail.com}\\[3mm]
R.F. Vitolo$^{4}$ \\
$^{4}$Department of Mathematics and Physics \textquotedblleft E. De
Giorgi\textquotedblright ,\\
University of Salento, Lecce, Italy\\
\url{raffaele.vitolo@unisalento.it}\\
\url{http://poincare.unisalento.it/vitolo}}
\date{}
\begin{document}

\maketitle

\begin{abstract}
  The Lagrangian representation of multi-Hamiltonian PDEs has been introduced
  by Y. Nutku and one of us (MVP). In this paper we focus on systems which are
  (at least) bi-Hamiltonian by a pair $A_1$, $A_2$, where $A_1$ is a
  hydrodynamic-type Hamiltonian operator. We prove that finding the Lagrangian
  representation is equivalent to finding a generalized vector field $\tau$
  such that $A_2=L_\tau A_1$. We use this result in order to find the
  Lagrangian representation when $A_2$ is a homogeneous third-order Hamiltonian
  operator, although the method that we use can be applied to any other
  homogeneous Hamiltonian operator.  As an example we provide the Lagrangian
  representation of a WDVV hydrodynamic-type system in $3$ components.

\bigskip

\noindent MSC: 37K05, 37K10, 37K20, 37K25.

\bigskip

\noindent Keywords: Lagrangian representation, bi-Hamiltonian structure,
hydrodynamic type system, WDVV equations.
\end{abstract}

\tableofcontents

\section{Introduction}

Integrable systems of nonlinear PDEs in $1+1$ dimensions (or, equivalently, in
$2$ independent variables) have a particularly rich structure. Evolutionary
systems of the form
\begin{equation}
  \label{eq:22}
  u^i_t = f^i(u^k,u^k_x,u^k_{xx},\ldots)
\end{equation}
can be given one or more Hamiltonian formulation
\begin{equation}
  \label{eq:10}
  u^i_{t} = f^i(u^k,u^k_x,u^k_{xx},\ldots) =
  A_{\alpha}^{ij}\fd{\mathbf{H}_{\alpha}}{u^j},\quad\text{(no sum in
    $\alpha$)},\quad i=1,\ldots,n
\end{equation}
through one or more Hamiltonian operators $A_\alpha$. Here, $\alpha$ is an
index running from $1$ to $M$, $\mathbf{H}_\alpha$ are Hamiltonian densities and
$\fd{}{u^j} = (-1)^{|\sigma|}\partial_\sigma\,\pd{}{u^i_\sigma}$ are
variational derivatives ($\sigma$ is a multi-index). The requirement of
$A_\alpha$ being Hamiltonian amounts at $A_\alpha$ being a formally
skew-adjoint differential operator in total derivatives $\partial_\sigma$ and
$[A_\alpha,A_\alpha] = 0$. The last operator is the Schouten bracket,
see~\cite{Dorf} for a definition. Hamiltonian operators map characteristic
vectors of conservation laws into generating functions of (generalized)
symmetries \cite{Dorf,olver}.

By a classical result by F. Magri (see \cite{Dorf} and references therein), the
integrability of the system~\eqref{eq:10} is ensured by the existence of at
least $2$ Hamiltonian operators that are compatible: $[A_1,A_2]=0$. This yields
a sequence of commuting conserved quantities and symmetries and, eventually,
leads to integration through the inverse scattering method.

An interesting feature of bi-Hamiltonian systems ($M=2$) or multi-Hamiltonian
systems ($M>2$) was first pointed out in \cite{Yavuz}. Suppose that the
Hamiltonian operator $A_\beta$ is invertible.  Then the operators
$R^\alpha_\beta=A_{\alpha}A_\beta^{-1}$ applied to eq.~\eqref{eq:10} yield,
after potential substitution $u^i=\varphi^i_x$, systems of PDEs which admit
local Lagrangians in the new variables $\varphi^i$. If all the operators
$A_\alpha$ are invertible there will be $2M-1$ new systems. This is the
\emph{Lagrangian representation} of the system~\eqref{eq:10}. It is clear that
having the possibility to represent a system of PDEs and its hierarchy of
symmetries by Lagrangian PDEs is mathematically interesting. This possibility
was considered by several authors (see, \emph{e.g.},\cite{OM98}), and was
systematically linked to the multi-Hamiltonian property
in~\cite{Yavuz}. Several examples have been considered so far: KdV, polytropic
gas dynamics, Kaup--Boussinesq, Kaup--Broer, Boussinesq, NLS~\cite{Yavuz}.

In this paper we will consider evolutionary systems of PDEs which are at least
bi-Hamiltonian by a pair of Hamiltonian operators $A_1$, $A_2$:
\begin{equation}\label{eq:381}
u_{t^{k}}^{i}=A_{2}^{is}\frac{\delta \mathbf{H}_{k}}{\delta u^{s}}%
=A_1^{is}\frac{\delta \mathbf{H}_{k+1}}{\delta u^{s}},
\end{equation}
where for $k=0$ we have the initial system of the hierarchy of commuting flows.
We make the assumption that the first Hamiltonian operator
$A_1$ is a first-order Dubrovin--Novikov homogeneous Hamiltonian operator
\begin{equation}
A_{1}^{ij}=g^{ij}\partial _{x}+b_{k}^{ij}u_{x}^{k},  \label{eq:192}
\end{equation}
where $g^{ij}$, $b_{k}^{ij}$ are functions of the field variables $(u^h)$ and
homogeneity is meant with respect to the grading that assigns the degree $1$ to
$\partial_x$.  Such operators are quite common among the bi-Hamiltonian
systems. Moreover, they are easily invertible: in the non-degenerate case
$\det(g^{ij})\neq 0$ there is always a transformation
$\bar{u}^i=\bar{u}^i(u^j)$ such that $A_1=K^{ij}\partial_x$, where $K^{ij}$ is
a constant symmetric matrix, hence $A_1^{-1}=(K^{ij})^{-1}\partial_x^{-1}$.

Bi-Hamiltonian systems of the above type are known to have the following
properties (besides admitting a Lagrangian representation).
\begin{enumerate}
\item They admit a nonlocal \emph{symplectic operator}
  $B=A_{1}^{-1}A_{2}A_{1}^{-1}$ \cite[Proposition 7.9]{Dorf}. This operator becomes
  local after the above potential substitution, and the above
  system~\eqref{eq:381} can be rewritten as
  \begin{equation}
    \label{eq:19}
    B_{im}\varphi _{t^{k}}^{m}=
      \frac{\delta \mathbf{H}_{k+2}}{\delta \varphi ^{i}},
  \end{equation}
  The above equation is evidently Lagrangian; its \emph{Lagrangian
    representation} is a Lagrangian for the above equation of the form
  \begin{equation}
    \label{eq:20}
    L_{n}\varphi _{t^{k}}^{n}-h_{k+2},
  \end{equation}
  where $h_{k+2}$ is the function that defines the Hamiltonian density
  $\mathbf{H}_{k+2}$ (which is a quantity that can be computed) and $(L_n)$ is
  a vector function of $\varphi^i$, $\varphi^i_x$, $\varphi^i_{xx}$, \dots,
  called the \emph{characteristic function} of the Lagrangian representation,
  to be determined.
\item The second Hamiltonian operator is the Lie derivative of the first one:
  $A_{2}=L_{\tau }A_{1}$, where $\tau $ is a generalized vector field (see
  \cite[Proposition 7.9]{Dorf} or, more generally
  \cite{DMS05,getz,DZ0}). According with a modern terminology \cite{DZ0,Lor},
  $A_2$ is a \emph{trivial infinitesimal deformation} of $A_1$.
\end{enumerate}

Clearly, the difficult part of finding the Lagrangian representation of the
equation~\eqref{eq:41} is finding $(L_n)$, as $h_{k+2}$ can be found from the
initial equation by bi-Hamiltonian recursion with usual methods. We stress that
the Lagrangian representation is \emph{local}, as there are topological
obstructions in finding global Lagrangians, see the discussion below.

In this paper we will prove the following:

\textbf{Theorem.}\ \emph{Finding the Lagrangian representation~\eqref{eq:42} of
  the integrable hierarchy~\eqref{eq:381} is equivalent to finding the
  generalized vector field $\tau $ such that $A_2 = L_\tau A_1$.}

More precisely, we will prove that the vector function $\psi= - (L_n)$ is
nothing but a potential of the symplectic form $B$ with respect to the
differential $e_1$ of the variational sequence \cite{Many,Dorf,olver,hand}.
Variational sequences have been introduced as the analogue of the de Rham
sequence for the calculus of variations. Let us summarize the features of this
topic that are relevant to this paper, see \cite{Many,Dorf,olver,hand} for much
deeper insight. Variational sequences can be schematically represented by the
sequence of differential operators
\begin{equation}
  \label{eq:23}
  \dots \overset{D}{\longrightarrow} \text{Lagrangians}
  \overset{\mathcal{E}}{\longrightarrow} \text{Variational 1-forms}
  \overset{\mathcal{H}}{\longrightarrow} \text{Variational 2-forms}\dots
\end{equation}
where $D$ is the total divergence, $\mathcal{E}$ is the Euler-Lagrange
operator, $\mathcal{H}$ is the Helmholtz operator any two consequent operations
are identically zero. Variational $1$-forms are covector-valued densities
$\psi$ which define systems of PDEs of the form $\psi=0$. If $H(\psi)=0$ then,
locally, the system $\psi=0$ is Lagrangian: locally there exists a Lagrangian
density $L$ such that $\mathcal{E}(L) = \psi$, and $L$ can be computed by the
Volterra homotopy operator \cite{Many,Dorf,olver,hand}.

The spaces of variational $k$-forms (variational $0$-forms are Lagrangians)
form sheaves, and the variational sequence turns out to be a locally exact
sequence of sheaves whose differentials are all denoted by $e_1$. Then, a
variational $2$-form $B$ is symplectic iff $e_1(B)=0$. If this is true, by
local exactness there exists a variational $1$-form $\psi$ (which is in general
only locally defined) such that $H(\psi) = B$. In the proof of the above
Theorem it will be shown that $\tau =A_{1}(\psi)$, therefore the Lagrangian
representation will have the same local (or global) character as $\psi$.  There
are well-known topological obstructions to the global exactness of the
variational sequence \cite{Many,Dorf,olver,hand}.

The above results hold in a differential-geometric setting, \emph{i.e.} for
$\mathcal{C}^\infty$ forms and vector fields, despite the fact that they have
initially been introduced having polynomial categories in mind \cite{Dorf}.

\medskip

With the above ideas in mind we found new nontrivial examples of Lagrangian
representation, besides those that were discussed so far, in the class of
bi-Hamiltonian systems where $A_2$ is a third-order homogeneous Hamiltonian
operator. In this class the homotopy operator of the variational sequence
allows us to choose a vector function $(L_n)$ which is: 1 - homogeneous, since
the homotopy operator preserves homogeneity, and 2 - dependent on third-order
derivatives, since the homotopy operator preserves the order of derivatives.

However, we had to overcome an obstacle. We are forced to compute in flat
coordinates of the first Hamiltonian operator $A_1$, as this makes $A_1$ easily
invertible. In these coordinates the coefficients of the homogeneous operator
$A_2$ yield non-removable singularities in the homotopy operator. So, we had to
develop a different method that, in principle, can be used with any homogeneous
operator $A_2$ of arbitrary order. We proved the following:

\textbf{Theorem.}\ {\itshape There exists a Lagrangian
  representation~\eqref{eq:20} where $L_n$ is a homogeneous polynomial of
  derivatives of degree $2$ if and only if the integrability condition
  $dT=0$ is fulfilled, where $T$ is a three-form defined in~\eqref{eq:21}.
  In this case we have
  \begin{equation}\label{eq:13}
    L_{n}=\left( \frac{1}{2}G_{nm}u_{x}^{m}+R_{nm}u_{x}^{m}\right) _{x} -
     \frac{1}{2}L_{nsm}u_{x}^{s}u_{x}^{m}
  \end{equation}
  where $(u^i)$ are flat coordinates of $g$, $G_{nm}$ is the leading
  coefficient of the symplectic operator $B$, $L_{nsm}u_{x}^{s}u_{x}^{m}$ are
  $n$ conservation law densities of the initial system of PDEs and $R_{nm}$ is
  a skew-symmetric tensor.  }

In our method we made a minimal order ansatz for the vector function $(L_n)$:
it can be a homogeneous polynomial of order not lower than $2$. This simplifies
computations at the cost of leading us to the integrability condition $dT=0$.

The above results have been successfully used to compute the Lagrangian
representation for the WDVV system of hydrodynamic type~\eqref{eq:5} for which
the bi-Hamiltonian nature is known \cite{FGMN}. The procedure can be applied
with equal ease to the bi-Hamiltonian systems \cite{KN1,KN2,pv}, thus providing
a wide range of examples. In all the above cases the integrability condition is
fulfilled and it is conjectured that the condition is a consequence of the fact
that $B$ is a symplectic operator. See the Conclusions for more details.

Symbolic computations were done in Reduce, a free Computer Algebra System,
using the package CDE, developed by the author of this paper
\cite{cde}. Interested readers are warmly invited to contact the author for
question on any aspect of the computations or to get software and/or
mathematical expressions of quantities that would be unpractical to include in
the paper.

\section{Lagrangian representation and triviality of $A_2$}
\label{sec:lagr-repr-bi}

In this section we will recall the construction of the Lagrangian
representation of a bi-Hamiltonian system in the case when $A_1$ is a
first-order Dubrovin--Novikov homogeneous Hamiltonian operator. Then we will
show the relationship between the (local) existence of the Lagrangian
representation and the (local) existence of a generalized vector field $\tau$
such that $A_2 = L_\tau A_1$.

Let us recall how the Lagrangian representation is defined in our framework. We
work in flat coordinates $(u^i)$ of $A_1$, so that $A_1=K^{ij}\partial_x$.
After the potential substitution $u^{i}=\varphi _{x}^{i}$ the integrable
hierarchy~\eqref{eq:381}
becomes
\begin{equation}\label{eq:39}
\varphi _{xt^{k}}^{i}=-A_{2}^{is}\partial _{x}^{-1}\frac{\delta \mathbf{H}%
_{k}}{\delta \varphi ^{s}}=-K^{is}\frac{\delta \mathbf{H}_{k+1}}{\delta
\varphi ^{s}}.
\end{equation}
Then we take two copies of this integrable hierarchy (here
$M_{ij}K^{js}=\delta^s_i$)
\begin{gather*}
 -M_{im}\varphi _{xt^{k}}^{m}=M_{im}A_{2}^{ms}\partial _{x}^{-1}\frac{\delta
\mathbf{H}_{k}}{\delta \varphi ^{s}}=\frac{\delta \mathbf{H}_{k+1}}{\delta
\varphi ^{i}},
\\
 -M_{im}\varphi
_{xt^{k+1}}^{m}=M_{im}A_{2}^{ms}\partial _{x}^{-1}\frac{\delta \mathbf{H}%
_{k+1}}{\delta \varphi ^{s}}=\frac{\delta \mathbf{H}_{k+2}}{\delta \varphi
^{i}},
\end{gather*}
which leads to two series of recursive relationships%
\begin{equation*}
\frac{\delta \mathbf{H}_{k+1}}{\delta \varphi ^{i}}=M_{ip}A_{2}^{pq}\partial
_{x}^{-1}\frac{\delta \mathbf{H}_{k}}{\delta \varphi ^{q}},\text{ \ }\varphi
_{xt^{k+1}}^{i}=A_{2}^{ip}M_{pq}\varphi _{t^{k}}^{q}.
\end{equation*}%
Thus%
\begin{equation*}
\frac{\delta \mathbf{H}_{k+2}}{\delta \varphi ^{i}}=-M_{im}\varphi
_{xt^{k+1}}^{m}=-M_{im}A_{2}^{mp}M_{pq}\varphi _{t^{k}}^{q}.
\end{equation*}
This means that the above equations are nothing but Euler--Lagrange equations
\begin{gather}\label{eq:41}
B_{im}\varphi _{t^{k}}^{m}= \frac{\delta \mathbf{H}_{k+2}}{\delta \varphi ^{i}},
\\
B_{ij} = - M_{ip}A_{2}^{pq}M_{qj}\label{eq:9}
\end{gather}
where $B=(B_{ij})$ is a local differential operator. It is not difficult to
prove that $B$ is a symplectic operator \cite{Dorf}.

We observe that the system \eqref{eq:39} is Lagrangian with respect to the
action density $\frac{1}{2}M_{ij}\varphi _{x}^{i}\varphi _{t^{k}}^{j}-h_{k+1}
$, where $H_{k}=\int h_{k}dx$. The equation~\eqref{eq:41} is Lagrangian too;
this fact is characteristic of bi-Hamiltonian systems and was systematically
investigated in \cite{Yavuz}. \emph{The Lagrangian representation of the
  bi-Hamiltonian system \eqref{eq:381}} is the Lagrangian of the
system~\eqref{eq:41}, which is of the type
\begin{equation}
L_{n}\varphi _{t^{k}}^{n}-h_{k+2}.  \label{eq:42}
\end{equation}

Now, we would like to prove one of the main results of this paper, \emph{ie}
the fact that the existence of a Lagrangian representation is equivalent to the
triviality of $A_2$ as an infinitesimal deformation of $A_1$, or $A_{2}=L_{\tau
}A_{1}$.

Let us recall the notion of Lie derivative for Hamiltonian operators. We denote
the derivative coordinates $u^i_{x\cdots x}$ by $u^i_\sigma$, where the index
$\sigma $ is the order of the $x$-derivative.  Given any vector function
$F=(F_i)$, $i=1,\dots,n$ we denote the linearization (or Fr\'echet derivative)
of $F$ and its formal adjoint, respectively acting on a generalized vector
field $\tau=\tau^i\pd{}{u^i}$ and on a covector $\psi=\psi_jdu^j$, where
$\tau^i=\tau^i(u^k_\sigma)$ and $\psi_j=\psi_j(u^k_\sigma)$, by
\begin{displaymath}
   \ell_F(\tau) = \pd{F^k}{u^i_\sigma}\partial_\sigma \tau^i, \quad
   \ell_F^*(\psi) = (-1)^{|\sigma|} \partial_\sigma
     \left(\pd{F^k}{u^i_\sigma} \psi_k\right).
\end{displaymath}
The above definition extends also to matrix differential operators: if
$A(X)=A^{i\sigma}_j\partial_\sigma X^j$, then we have the differential operator
in two arguments $\ell_{A,X}(Y)= (\pd{A^{i\sigma}_j}{u^k_\mu})\partial_\sigma
X^j\partial_\mu Y^k$. Then
\begin{displaymath}
L_\tau
A_1(\psi)= - [A_1,\tau](\psi)= \ell_{A_1,\psi}(\tau) - \ell_\tau(A_1(\psi))  -
 A_1(\ell_\tau^*(\psi)),
\end{displaymath}
where $\psi$ is an arbitrary vector-valued density
and $[A_1,\tau]$ is the Schouten bracket (see, for example, \cite{Dorf}).

It is known \cite[Theorem 7.9]{Dorf} that, if $A_1$ and $A_2$ is a pair of
compatible Hamiltonian operators where $A_1$ is invertible, then
$B=A_1^{-1}A_2A^{-1}_1$ is a symplectic operator. Moreover, if $\psi$ is a
potential of $B$ in the variational sequence, \emph{i.e.} $e_1(\psi) = B$, then
the vector field $\tau =A_{1}(\psi )$ yields $A_{2}=L_{\tau }A_{1}$. This fact
can be regarded as a direct consequence of a more general fact, namely, the
vanishing of the Lichnerowicz--Poisson cohomology (which is defined by the
differential $d_{1}=[A_{1},\cdot ]$) for first-order homogeneous operators. See
\cite{getz,DMS05}, or \cite[Lemma 2.4.20]{DZ0} for a more recent proof.  See
also \cite{Art} for a complete discussion about the problem of characterizing
which operators of the form $L_{\tau }A_{1}$ are Hamiltonian.

There is a deep link between the Lagrangian representation, the symplectic form
$B$ and the fact that $A_2$ can be expressed as the Lie derivative of $A_1$.

\begin{theorem}
  \label{th:symplag} The systems \eqref{eq:41} admit a Lagrangian $L_{n}\varphi
  _{t^{k}}^{n}-h_{k+2}$ where $\psi = - (L_{n})$ is a potential of the symplectic
  form in the variational sequence: $e_{1}(\psi )=B$. Moreover, we have
  \begin{equation}
    \label{eq:14}
     A_2 = L_\tau A_1,\quad\text{where}\quad \tau=A_1(\psi) =
     - K^{in}L_n\pd{}{u^i}.
  \end{equation}
\end{theorem}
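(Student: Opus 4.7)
My plan is to prove the two assertions separately. The first assertion identifies $\psi=-(L_n)$ as a potential $e_1(\psi)=B$ by a direct Euler--Lagrange calculation; the second follows from the symplecticity of $B$ combined with the cited results on the vanishing of the Lichnerowicz--Poisson cohomology of $A_1$.

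First I would compute the Euler--Lagrange equations of the candidate Lagrangian $\mathcal{L}=L_n\varphi^n_{t^k}-h_{k+2}$ with respect to $\varphi^i$. The $h_{k+2}$ piece contributes $-\delta\mathbf{H}_{k+2}/\delta\varphi^i$. The bilinear piece $L_n\varphi^n_{t^k}$ contributes two terms: varying inside $L_n$ and integrating by parts in $x$ gives $\sum_\sigma(-\partial_x)^{|\sigma|}[(\partial L_n/\partial\varphi^i_\sigma)\varphi^n_{t^k}]$, which is $\ell^*_L(\varphi_{t^k})_i$; varying the $\varphi^n_{t^k}$ factor and integrating by parts in $t^k$ gives $-\partial_{t^k}L_i=-\ell_L(\varphi_{t^k})_i$. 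Hence the EL equations read $[\ell^*_L-\ell_L](\varphi_{t^k})_i=\delta\mathbf{H}_{k+2}/\delta\varphi^i$. Recalling that, for a variational 1-form $\psi$, the differential $e_1(\psi)$ is precisely the skew-adjoint combination $\ell_\psi-\ell^*_\psi$, the operator $\ell^*_L-\ell_L$ coincides with $e_1(-L)=e_1(\psi)$. Therefore the EL equations reproduce \eqref{eq:41} if and only if $e_1(\psi)=B$, which is what we wanted.

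For existence of the potential $\psi$, I would invoke Dorfman's Proposition~7.9 cited above: the compatibility of the pair $(A_1,A_2)$ with $A_1$ invertible ensures that $B=A_1^{-1}A_2A_1^{-1}$ is a symplectic operator, which in the variational bicomplex language means $e_1(B)=0$. Local exactness of the variational sequence then yields $\psi$ with $e_1(\psi)=B$ (explicitly constructible by the Volterra homotopy operator). Setting $L_n=-\psi_n$ produces the announced Lagrangian.

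For the second formula, $A_2=L_\tau A_1$ with $\tau=A_1(\psi)$, I would invoke the cited triviality result (Getzler, Degiovanni--Magri--Sciacca, Dorfman, Lorenzoni) stating that for a first-order Dubrovin--Novikov operator $A_1$ every $d_1$-closed bivector is $d_1$-exact, where $d_1=[A_1,\cdot]$: the obstruction to writing $A_2=L_\tau A_1$ lies exactly in the cohomology class of $B$ under $e_1$, and vanishes once $\psi$ is produced; then $\tau=A_1(\psi)$ is the required generalized vector field. The component expression $\tau=-K^{in}L_n\,\partial/\partial u^i$ follows by substituting $A_1=K^{ij}\partial_x$ and $\psi=-(L_n)$ and rewriting in the $u$-coordinates. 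The main obstacle I expect is keeping the sign and adjointness conventions straight so that the EL calculation, the definition of $e_1$ on variational 1-forms, and the conventions of \cite{Dorf} for $L_\tau A_1$ actually line up; once that bookkeeping is fixed the two assertions reduce respectively to a bilinear computation and a citation.
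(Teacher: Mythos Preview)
Your first half matches the paper: both compute the Euler--Lagrange equations of $L_n\varphi^n_{t^k}-h_{k+2}$, identify the operator acting on $\varphi_{t^k}$ as $\ell^*_L-\ell_L=e_1(\psi)$, and invoke symplecticity of $B$ plus local exactness of the variational sequence to produce $\psi$.

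For the second assertion your route differs from the paper's. The paper does \emph{not} appeal to the vanishing of the Lichnerowicz--Poisson cohomology; it gives a direct self-contained verification in flat coordinates. Starting from $B=\ell_\psi-\ell^*_\psi$ in potential variables, it records how $\ell_\psi$ and $\ell^*_\psi$ transform under $u^i=\varphi^i_x$ (picking up a $\partial_x$ on the right, respectively a $-\partial_x$ on the left), rewrites \eqref{eq:9} as $A_2^{ij}=-K^{ih}(\ell_\psi\circ\partial_x+\partial_x\circ\ell_\psi^*)_{hk}K^{kj}$, and then expands $L_\tau A_1(\xi)$ from the definition, using that $\ell_{A_1,\psi}(\tau)=0$ (constant coefficients) and $\ell_\tau=K\ell_\psi$, to recover exactly this expression. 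Your citation-based argument is also valid in principle, but note that invoking ``$d_1$-closed $\Rightarrow$ $d_1$-exact'' only yields \emph{some} $\tau$, not the specific $\tau=A_1(\psi)$; the statement that this particular $\tau$ works requires the intertwining identity $L_{A_1(\psi)}A_1=A_1\,e_1(\psi)\,A_1$, which is the content of Dorfman's Theorem~7.9 rather than of the cohomology-vanishing theorems of Getzler or Degiovanni--Magri--Sciacca. The paper's direct computation buys exactly this: it establishes the specific formula without that extra citation, and as a byproduct produces the explicit coordinate expression \eqref{eq:11} used later in Section~3.
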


\begin{proof}
  Any symplectic form admits a (local) potential $\psi = - (L_{n})$ with respect
  to the differential $e_1$ in the variational sequence
  \cite{Many,Dorf,hand}. This means in coordinates
  \begin{equation}
    B_{in}= - \left(\pd{L_i}{\varphi^n_\sigma}\partial _{\sigma } -
    (-1)^{\sigma }\partial _{\sigma }\pd{L_n}{\varphi^i_\sigma}
     \right),  \label{eq:43}
  \end{equation}
  Then, it is easy to prove that
  \begin{equation}
    \fd{(L_n\varphi^n_t)}{\varphi^i}=(-1)^{\sigma }\partial _{\sigma }\left( %
      \pd{L_n}{\varphi^i_\sigma}\varphi _{t}^{n}\right) -\pd{L_i}{\varphi^n_\sigma}%
    \partial _{\sigma }\varphi _{t}^{n} = B_{in}\varphi _{t}^{n},  \label{eq:44}
  \end{equation}
  so that $L_{n}\varphi_{t^{k}}^{n}-h_{k+2}$ is the Lagrangian of~\eqref{eq:41}.

  Let us recall that $\ell_\psi = - \pd{L_i}{\varphi^n_\sigma}\partial _{\sigma
  }$. Then, eq.~\eqref{eq:43} can be rewritten as
  \begin{equation}\label{eq:8}
    B=\ell_\psi - \ell^*_\psi
  \end{equation}
  in potential variables $\varphi^i_x$. Changing coordinates to $u^i$ yields
  the following changes of variables:
  \begin{displaymath}
    \ell_\psi(\varphi^i_{x},\varphi^i_{xx},\ldots)=
    \ell_\psi(u^i,u^i_{x},\ldots)\circ \partial_x,
    \quad
    \ell_\psi^*(\varphi^i_{x},\varphi^i_{xx},\ldots)=
    - \partial_x \circ \ell_\psi^*(u^i,u^i_{x},\ldots).
  \end{displaymath}
  After multiplying~\eqref{eq:9} to the left and right by $K$ we have, in
  coordinates $(u^i)$,
  \begin{equation}\label{eq:11}
    A_2^{ij}= - K^{ih}(\ell_\psi\circ\partial_x +
      \partial_x\circ\ell_\psi^*)_{hk}K^{kj}
  \end{equation}
  We would like to prove that $A_2=L_\tau A_1$. A natural candidate for $\tau$
  is $\tau=K\psi$: we have $\ell_{A_1,\psi}(\tau)=0$ because $A_1$ has constant
  coefficients, and $\ell_\tau=K\ell_\psi$, so that expanding the definition
  of the operator $L_\tau A_1$ by computing it on a covector field $\xi$ we
  have:
  \begin{align*}
    L_\tau A_1(\xi)&= - K\ell_\psi(A_1(\xi)) - A_1((K\ell_\psi)^*(\xi))
    \\
    &= - K^{ih}(\ell_\psi\circ\partial_x +
      \partial_x\circ\ell_\psi^*)_{hk}K^{kj}
    \\
    &=A_2(\xi).\qedhere
  \end{align*}
\end{proof}

The above Theorem has the following straightforward consequence.

\begin{corollary}
  Finding the Lagrangian representation~\eqref{eq:42} of the integrable
  hierarchy~\eqref{eq:381} is equivalent to finding the generalized vector
  field $\tau $ such that $A_2 = L_\tau A_1$.
\end{corollary}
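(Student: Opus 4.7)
The plan is to exploit Theorem~\ref{th:symplag} in both directions, noting that the corollary is essentially a repackaging of what the theorem has already established.

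For the direction ``Lagrangian representation implies existence of $\tau$'', I would simply invoke Theorem~\ref{th:symplag}: given a Lagrangian representation $L_n\varphi^n_{t^k}-h_{k+2}$, the theorem explicitly produces $\tau = A_1(\psi) = -K^{in}L_n\,\partial/\partial u^i$ with $\psi = -(L_n)$, and asserts $A_2 = L_\tau A_1$. So this half is immediate.

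For the converse ``existence of $\tau$ implies existence of a Lagrangian representation'', I would run the proof of Theorem~\ref{th:symplag} in reverse. Starting from a $\tau = \tau^i\,\partial/\partial u^i$ with $A_2 = L_\tau A_1$, define the covector density $\psi = \psi_j\,du^j$ by $\psi_j = K_{ji}\tau^i$, where $K_{ji}$ is the matrix inverse of $K^{ij}$; this is the natural algebraic inverse of the formula $\tau = A_1(\psi)$ in the sense used by the theorem. Because $A_1 = K^{ij}\partial_x$ has constant coefficients, the term $\ell_{A_1,\psi}(\tau)$ vanishes and the equation $A_2 = L_\tau A_1$ reduces, after expanding the Lie derivative on an arbitrary covector $\xi$, to
\begin{equation*}
  A_2^{ij} = -K^{ih}(\ell_\psi\circ\partial_x + \partial_x\circ\ell_\psi^*)_{hk}K^{kj}.
\end{equation*}
Rewriting this identity in potential variables $\varphi^i$, the extra $\partial_x$'s are absorbed by the change of variables for linearizations recorded in the proof of Theorem~\ref{th:symplag}, yielding $B = \ell_\psi - \ell_\psi^*$, i.e.\ $e_1(\psi) = B$. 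Setting $L_n = -\psi_n$ and applying equation~\eqref{eq:44}, I would then obtain $B_{in}\varphi^n_t = \delta(L_n\varphi^n_t)/\delta\varphi^i$, so that $L_n\varphi^n_{t^k} - h_{k+2}$ is a Lagrangian for~\eqref{eq:41}, which is exactly the Lagrangian representation~\eqref{eq:42}.

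There is no serious obstacle: the two directions are genuinely symmetric once one accepts the dictionary between $u$-coordinates and potential $\varphi$-coordinates for the linearization operators. The only point requiring care is precisely this bookkeeping (the asymmetry between the algebraic action $A_1(\psi) = K\psi$ used to define $\tau$ and the differential expression $e_1(\psi) = \ell_\psi - \ell_\psi^*$ used to characterize $B$), but the conventions were fixed in the proof of Theorem~\ref{th:symplag} and can be reused verbatim.
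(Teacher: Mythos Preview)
Your argument is correct and matches the paper's approach: the paper states this corollary as an immediate consequence of Theorem~\ref{th:symplag} without further proof, and your two directions simply unpack that claim using the formulas and coordinate conventions already established there. The only addition you make is spelling out the converse by inverting $\tau=K\psi$ and tracing back to $e_1(\psi)=B$, which is exactly the ``straightforward'' reasoning the paper leaves implicit.
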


\begin{remark}\label{sec:lagr-repr-triv}
  The vector function $\psi = - (L_n)$ depends on derivatives of $(\varphi^i)$
  which are of the same order of those appearing in $A_2$ (in potential
  coordinates). This is due to the fact that the homotopy operator of the
  variational sequence does not change the order of derivatives. Theorems and
  conjectures about minimising the order in inverse problems of the calculus of
  variations are discussed in \cite{hand}.
\end{remark}

\section{Lagrangian representation for homogeneous
 bi-Hamiltonian pairs}

It is now clear that the explicit expression of $\tau$ also yields the
Lagrangian representation of the given bi-Hamiltonian system of PDEs.

Following \cite{Dorf,DMS05,DZ0}, $\tau$ can be computed through $\psi$, and
$\psi$ can be computed using the homotopy operator in the variational
sequence. The vector field $\tau$ plays the role of the \emph{infinitesimal
  deformation} in the perturbative approach of \cite{DZ0}. The approach works
in the simplest examples like the KdV equation \cite{DMS05}; however, if the
range of examples is extended to even just slightly more complicated examples
the homotopy operator approach is no longer effective. Indeed the integrand in
the homotopy operator is singular for $t=0$ in many examples (see
Section~\ref{sec:lagr-repr-wdvv}).  Such singularities are not removable by a
shift: the shift operator does not bring total divergencies into total
divergencies, hence shifting does not preserve the homotopy operator in the
variational sequence (see \cite[p.\ 65]{Dorf}).

In this Section we propose an alternative approach to the computation of $\tau$
when $A_2$ is a homogeneous operator of Dubrovin--Novikov type \cite{DN2}. The
approach will be developed in details when $A_2$ is of the third order. Indeed,
there is a huge family of examples of systems of PDEs which are bi-Hamiltonian
with respect to a pair of homogeneous operators $A_1$ and $A_2$, with $A_1$ of
the first order and $A_2$ of the third order, see
Section~\ref{sec:lagr-repr-wdvv}. However, our methods easily extend to
homogeneous operators of arbitrary order.

Under the above hypotheses $A_2$ has the form
\begin{multline}
A_{2}^{ij}=g_{2}^{ij}(\mathbf{u})\partial _{x}^{3}+b_{2k}^{ij}(\mathbf{u}
)u_{x}^{k}\partial _{x}^{2}+[c_{2k}^{ij}(\mathbf{u})u_{xx}^{k}+c_{2km}^{ij}(
\mathbf{u})u_{x}^{k}u_{x}^{m}]\partial _{x}  \label{eq:17} \\
+d_{2k}^{ij}(\mathbf{u})u_{xxx}^{k}+d_{2km}^{ij}(\mathbf{u}
)u_{xx}^{k}u_{x}^{m}+d_{2kmn}^{ij}(\mathbf{u})u_{x}^{k}u_{x}^{m}u_{x}^{n}.
\end{multline}
We assume that $A_2$ is non-degenerate, \emph{i.e.} $\det{g}^{ij}\neq 0$.
Operators of the above type can always be brought by a point transformation of
the type $\tilde{u}=\tilde{u}(\mathbf{u})$ into the canonical form
\cite{BP,Doyle,GP97,GP91}
\begin{equation}
A_{2}^{ij}=\partial _{x}^{{}}( g^{ij}\partial
_{x}^{{}}+c_{k}^{ij}a_{x}^{k}) \partial _{x}^{{}}.  \label{casimir}
\end{equation}
Such operators have been recently studied and classified \cite{fpv,fpv2}.

Now we devote ourselves to finding the Lagrangian representation.  First of all
we observe that $L_n=L_n(u^k,u^k_x,u^k_{xx},u^k_{xxx})$.  Then, since homotopy
operator in the variational sequence preserves homogeneity, the functions $L_n$
must be a homogeneous polynomials with respect to the derivative
coordinates. We have another interesting property of each of the functions
$L_n$.

\begin{lemma}
The components $L_n$ of the Lagrangian representation~\eqref{eq:42} are
conservation law densities of the initial system~\eqref{eq:381}.
\end{lemma}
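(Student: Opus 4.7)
The plan is to work in potential coordinates $(\varphi^i)$ and read off the conservation law directly from the Euler--Lagrange equation~\eqref{eq:41}, using two structural observations. First, the ansatz $L_n=L_n(u^k,u^k_x,u^k_{xx},u^k_{xxx})$ combined with the potential substitution $u^i=\varphi^i_x$ forces $L_n$ to depend only on strict $x$-derivatives of $\varphi$, so $\partial L_n/\partial\varphi^i=0$. Second, the Hamiltonian density $h_{k+2}$ of the hierarchy depends only on $u$ and its $x$-derivatives, hence $\partial h_{k+2}/\partial\varphi^i=0$.

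Given these observations, I would substitute the explicit formula~\eqref{eq:43} for $B_{in}$ into~\eqref{eq:41}. The contribution $-(\partial L_i/\partial\varphi^n_\sigma)\partial_\sigma\varphi^n_{t^k}$ equals $-D_{t^k}L_i$ by the chain rule, because $L_i$ has no explicit $t^k$-dependence and all its dependence on $t^k$ passes through $\varphi^n_\sigma$. The remaining term $(-1)^{|\sigma|}\partial_\sigma(\partial L_n/\partial\varphi^i_\sigma\,\varphi^n_{t^k})$ splits into its $\sigma=0$ piece $(\partial L_n/\partial\varphi^i)\varphi^n_{t^k}=0$ plus a sum over $|\sigma|\geq 1$ that is manifestly a total $x$-derivative. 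Analogously, $\delta h_{k+2}/\delta\varphi^i$ reduces to a total $x$-derivative because its $\sigma=0$ piece $\partial h_{k+2}/\partial\varphi^i$ vanishes. Collecting, \eqref{eq:41} becomes $-D_{t^k}L_i=D_xF_i$ for some $F_i$, which is precisely the statement that $L_i$ is a conservation law density for the $t^k$-flow.

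Since the argument is uniform in $k$, each $L_n$ is simultaneously a conserved density for every flow of the hierarchy~\eqref{eq:381}, including the initial one. Moreover, because $L_n$ is independent of $\varphi^i$ itself, it descends to a function of $(u^k,u^k_x,\ldots)$ and the total-$x$-derivative property is preserved under this identification. The main difficulty I foresee is purely in the bookkeeping: separating, on each side of~\eqref{eq:41}, the genuine $D_{t^k}L_i$ term from the pieces that vanish by the two dependence hypotheses and from those that are visibly total $x$-derivatives. Beyond this notational care, the proof is essentially a one-line rearrangement of the Euler--Lagrange equation once the vanishing of $\partial L_n/\partial\varphi^i$ and $\partial h_{k+2}/\partial\varphi^i$ is recognized.
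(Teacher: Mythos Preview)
Your argument is correct and takes a genuinely different route from the paper. The paper does not manipulate the Euler--Lagrange equation~\eqref{eq:41} directly; instead it uses the structure formula~\eqref{eq:11} to expand $A_2^{ij}\tilde\psi_j$ and isolates the zeroth-order (in $\partial_\sigma\tilde\psi$) piece as $K^{ih}\partial_x\bigl(\mathcal{E}(L_k)_h\bigr)K^{kj}\tilde\psi_j$. Feeding in the Casimirs $\mathbf{H}_0^{(k)}=\int u^k\,dx$ of $A_1$ (so that $\tilde\psi$ is constant and the higher terms drop), the bi-Hamiltonian recursion~\eqref{eq:381} then identifies $\mathbf{H}_1^{(k)}=K^{km}\int L_m\,dx$, which places the $L_m$ \emph{explicitly} in the lattice of conserved densities as the first step above the Casimirs. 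Your approach, by contrast, is more elementary: it needs only the two locality observations ($\partial L_n/\partial\varphi^i=0$ and $\partial h_{k+2}/\partial\varphi^i=0$) and reads the conservation law straight off~\eqref{eq:41} via~\eqref{eq:43}, without invoking~\eqref{eq:11} or the Casimir chain. What the paper's argument buys is the precise identification of \emph{which} conserved densities the $L_n$ are; what yours buys is simplicity, independence from the third-order ansatz, and a uniform statement for every $t^k$-flow of the hierarchy (subject to locality of $h_{k+2}$, which is implicit in the paper's framework~\eqref{eq:381} where all variational derivatives are taken with respect to $u$).
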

\begin{proof}
  We can start the recurrence relation~\eqref{eq:381} from $\mathbf{H}%
  _{0}^{(k)}=\int u^{k}dx$: we have $\mathbf{H}_{1}^{(k)}=K^{km}\int
  L_{m}dx$. Indeed, the expression $A_2^{ij}\tilde{\psi}_j$ can be expanded
  using formula~\eqref{eq:11}; we have
  \begin{align*}
    A_2^{ij}\tilde{\psi}_j
      =& K^{ih} \left(
        \pd{L_h}{u^k_\sigma}\partial _{\sigma}\partial_x
          \left(K^{kj}\tilde{\psi}_j\right)
+
    \partial_x(-1)^{\sigma }\partial _{\sigma
      }\left(\pd{L_k}{u^h_\sigma}K^{kj}\tilde{\psi}_j\right)
\right)
    \\
      =& K^{ih}\partial_x\left(\mathcal{E}(L_k)_h\right)K^{kj}\tilde{\psi}_j
         + \text{higher order terms,},
  \end{align*}
  where `higher order terms' are terms linear in $\partial_\sigma\tilde{\psi}_j$
  and
  \begin{displaymath}
    \mathcal{E}(L_k)_h = (-1)^{\sigma }\partial _{\sigma
      }\left(\pd{L_k}{u^h_\sigma}\right)
  \end{displaymath}
  is the Euler-Lagrange expression. Using the above expression in the
  recurrence relation yields the proof.
\end{proof}

The above Lemma does not help us in finding a representative of $L_n$ of
minimal order (up to total divergencies). The minimisation of order in inverse
problems of the Calculus of Variations is a mathematical area which is richer
in conjectures than in results, see \cite{hand} for a review on these
problems. However, in concrete examples homogeneity considerations on the
recurrence relation tells us that $L_n$ can be a homogeneous polynomial of the
derivative coordinates of degree not less than $2$.  Every homogeneous
polynomial of degree $2$ has the form
$L_n=L^1_{nmp}u^m_xu^p_x+L^2_{nm}u^n_{xx}$, but we prefer to rewrite $L_n$ as
follows:
\begin{equation}
L_{n}=\left( \frac{1}{2}G_{nm}u_{x}^{m}+R_{nm}u_{x}^{m}\right) _{x}-\frac{1}{%
2}L_{nsm}u_{x}^{s}u_{x}^{m}.  \label{eq:46}
\end{equation}%
where $G_{nm}=G_{mn}$, $R_{mn}=-R_{nm}$, $L_{nsm}=L_{nms}$. Indeed, examples
show that we can actually find conservation law densities of the above
form. Note that the $n$ conservation law densities
$-\frac{1}{2}L_{nsm}u_{x}^{s}u_{x}^{m}$ are not independent in general (see
\cite{fs}). We need the corresponding form of $A_2$; it can be obtained by a
straightforward computation.
\begin{lemma}
  If the Lagrangian representation $(L_n)$ has the form \eqref{eq:46} then the
  operator $A_2$ takes the form
  \begin{multline} \label{eq:47} A_2^{ij}=K^{ip}G_{pn}K^{nj}\partial_{x}^{3} +
    K^{ip}(F_{pmn}+G_{pn,m}-L_{npm}-L_{pnm})K^{nj}u_{x}^{m}\partial_{x}^{2} \\
    +K^{ip}\left[F_{pmn}u_{xx}^{m}+\left( F_{pmn,s}-\frac{1}{2}L_{psm,n}-
        \frac{1%
        }{2}L_{nsm,p}\right) u_{x}^{s}u_{x}^{m}\right] K^{nj}\partial _{x} \\
    +K^{ip}\left[ L_{npm}u_{xx}^{m}+L_{npm,s}u_{x}^{s}u_{x}^{m}- \frac{1}{2}
      L_{nsm,p}u_{x}^{s}u_{x}^{m}\right] _{x}K^{nj},
\end{multline}
where
\begin{equation}
  F_{pmn}=\frac{1}{2}G_{pm,n}+\frac{1}{2}G_{np,m}- \frac{1}{2}
G_{nm,p}+2L_{npm}+R_{pm,n}+R_{mn,p}+R_{np,m}.  \label{eq:48}
\end{equation}
\end{lemma}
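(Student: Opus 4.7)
My plan is to derive \eqref{eq:47} by direct substitution of the ansatz \eqref{eq:46} into formula \eqref{eq:11} from Theorem~\ref{th:symplag}. Since $\psi=-(L_n)$, that formula becomes
\[
A_2^{ij}=K^{ih}\bigl(\ell_L\circ\partial_x+\partial_x\circ\ell_L^*\bigr)_{hk}K^{kj},
\]
and all that is required is to expand the right-hand side and collect terms according to the derivative structure in \eqref{eq:17}. The ansatz makes $L_n$ polynomial in $u^k_x, u^k_{xx}$ alone, so $\ell_L$ has order at most two and the right-hand side is automatically a third-order operator, matching \eqref{eq:17}.

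Concretely, I would first expand
\[
L_n=\tfrac12 G_{nm}u^m_{xx}+R_{nm}u^m_{xx}+\tfrac12 G_{nm,p}u^p_x u^m_x+R_{nm,p}u^p_x u^m_x-\tfrac12 L_{nsm}u^s_x u^m_x,
\]
and read off $\partial L_n/\partial u^k_{xx}$, $\partial L_n/\partial u^k_x$ and $\partial L_n/\partial u^k$, using the symmetries $G_{nm}=G_{mn}$, $R_{nm}=-R_{mn}$ and $L_{nsm}=L_{nms}$ (together with the freedom to symmetrise in $(s,m)$ when contracted with $u^s_x u^m_x$). Those three partials determine $\ell_L$ and, by formally integrating by parts, $\ell_L^*$. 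Composing with $\partial_x$ on the appropriate side and conjugating by the constant matrix $K$ then produces the required third-order operator, and a quick check of the leading coefficient (only $\partial L_n/\partial u^k_{xx}$ contributes) already yields $K^{ip}G_{pn}K^{nj}\partial_x^3$, because the antisymmetric $R$-piece cancels between $\ell_L$ and $\ell_L^*$.

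The central bookkeeping observation, which I expect to be the only subtle point, is that the coefficients of the lower-order terms $u^m_x\partial_x^2$, $u^m_{xx}\partial_x$, $u^s_x u^m_x\partial_x$ and their $\partial_x$-descendants repeatedly regroup into the particular combination \eqref{eq:48},
\[
F_{pmn}=\tfrac12 G_{pm,n}+\tfrac12 G_{np,m}-\tfrac12 G_{nm,p}+2L_{npm}+R_{pm,n}+R_{mn,p}+R_{np,m}.
\]
The three derivatives of $G$ with mixed signs arise from Leibniz-differentiating the leading coefficient $\tfrac12 G_{hk}+R_{hk}$ inside $\partial_x\circ\ell_L^*$, the $2L_{npm}$ contribution comes from the quadratic part $-\tfrac12 L_{nsm}u^s_x u^m_x$ of the ansatz after symmetrising in $(s,m)$, and the cyclic sum $R_{pm,n}+R_{mn,p}+R_{np,m}$ is produced by combining the $R_{nm,p}u^p_x u^m_x$ term of $L_n$ with the $R$-derivatives coming out of the adjoint, using the antisymmetry of $R$ at each step. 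Recognising $F_{pmn}$ as the natural shorthand then collapses several pages of index manipulation into the compact form \eqref{eq:47}; the main obstacle is really just keeping track of the many Leibniz contributions and verifying that everything regroups into the cyclic combination above, for which a Reduce/CDE verification (as mentioned in the Introduction) is the natural sanity check.
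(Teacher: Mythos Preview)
Your approach is exactly the paper's: it treats this lemma as a ``straightforward computation'' obtained by plugging the ansatz \eqref{eq:46} into formula \eqref{eq:11}, and in the proof of the subsequent theorem it even records the intermediate expansion \eqref{eq:45} of \eqref{eq:11} for second-order $L_n$ that you describe. Your outline of how the $G$-, $L$- and $R$-contributions regroup into $F_{pmn}$ is a faithful (and more explicit) account of that computation.
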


Now we can prove the main result of this section.

\begin{theorem}
The Lagrangian representation~\eqref{eq:42} can be found by a Lagrangian of
the form given in~\eqref{eq:46} if and only if the tensor
\begin{equation}
  \label{eq:21}
  T_{pmn} =  F_{pmn} -
    \frac{1}{2}\left(G_{pm,n} + G_{np,m} - G_{nm,p}+4L_{npm}\right)
\end{equation}
is a closed $3$-form. In this case $G_{kp}$ and $L_{smn}$ can be uniquely
determined while $R_{kp}$ is determined up to the differential of a $1$-form.
\end{theorem}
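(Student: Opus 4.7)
My plan is to compare the two expressions for $A_{2}$ coefficient-by-coefficient: the given form \eqref{eq:17} and the form \eqref{eq:47} that the previous lemma produces under the ansatz \eqref{eq:46}. The highest-order terms pin down the unknowns $G$ and $L$ uniquely. First, the $\partial_x^3$ coefficient gives $g_2^{ij}=K^{ip}G_{pn}K^{nj}$, so $G_{pn}$ is determined (its symmetry follows from that of $g_2^{ij}$). Second, the coefficient of $u_{xxx}^k$, produced by the outer $\partial_x$ in the last line of \eqref{eq:47} hitting $L_{npm}u_{xx}^m$, gives $d_{2k}^{ij}=K^{ip}L_{npk}K^{nj}$ and therefore fixes $L_{npk}$ with the symmetries required by \eqref{eq:46}. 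Finally, the coefficient of $u_x^m\partial_x^2$ determines $F_{pmn}$ algebraically in terms of $b_{2m}^{ij}$ and the already known $G$, $L$. At this stage the tensor $T_{pmn}$ from \eqref{eq:21} is fully computable from the coefficients of $A_2$ alone, \emph{without yet knowing} $R$.

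The recovery of $R$ is then the crux. Substituting \eqref{eq:48} into \eqref{eq:21} and cancelling the symmetric-in-$G,L$ terms leaves
\begin{equation*}
T_{pmn} \;=\; R_{pm,n}+R_{mn,p}+R_{np,m}.
\end{equation*}
Interpreting the skew tensor $R_{mn}$ as a $2$-form $R=\frac{1}{2}R_{mn}\,du^m\wedge du^n$, the right-hand side is (up to a conventional factor) $(dR)_{pmn}$, so the equation reads $T=dR$. By the Poincar\'e lemma, this has a local solution iff $dT=0$; necessity is immediate from $d^2=0$, and sufficiency delivers local existence of $R$. The kernel of $d\colon\Omega^{2}\to\Omega^{3}$ consists of closed $2$-forms, which are locally exact, so $R$ is determined up to the differential of a $1$-form. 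Combined with the uniqueness of $G$ and $L$ above, this yields the stated uniqueness statement.

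The step I expect to be the real obstacle is verifying that, once $G$, $L$ and $R$ have been fixed as above, the remaining coefficients of \eqref{eq:47} --- those of $u_x^s u_x^m \partial_x$, $u_{xx}^m u_x^n$, and $u_x^k u_x^m u_x^n$ --- automatically reproduce the corresponding $c_{2km}^{ij}$, $d_{2km}^{ij}$, $d_{2kmn}^{ij}$ in \eqref{eq:17}. Rather than matching these coefficientwise I would invoke Theorem~\ref{th:symplag} in reverse: form $\psi=-(L_n)$ with $L_n$ as in \eqref{eq:46}, set $\tau=-K^{in}L_n\pd{}{u^i}$, and show that $L_\tau A_1$ coincides with the given $A_2$. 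Agreement on the leading coefficients together with the integrability condition $dT=0$, plus the fact that both operators are Hamiltonian and that $A_2$ has the Dubrovin--Novikov canonical form \eqref{casimir}, should force agreement on all lower-order coefficients; this structural rigidity is where most of the technical bookkeeping lies.
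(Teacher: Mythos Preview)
Your approach is essentially the same as the paper's: match coefficients between the general form~\eqref{eq:17} and the computed form~\eqref{eq:47}, read off $G$, $L$, $F$, then recast the residual equation~\eqref{eq:48} as $dR=T$ and invoke the Poincar\'e lemma, with the freedom in $R$ being exactly a closed (hence locally exact) $2$-form.

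Two minor points of divergence are worth noting. First, the paper reads $F_{pmn}$ directly from the $u^m_{xx}$-part of the $\partial_x$ coefficient, comparing to $c_{2m}^{ij}$, whereas you extract it from the $u^m_x\partial_x^2$ coefficient, comparing to $b_{2m}^{ij}$; either route determines $F$ once $G$ and $L$ are known. Second, and more interestingly, the paper's proof simply \emph{stops} after solving $dR=T$: it does not explicitly verify that the remaining coefficients ($c_{2km}^{ij}$, $d_{2km}^{ij}$, $d_{2kmn}^{ij}$, and whichever of $b_{2m}^{ij}$ or $c_{2m}^{ij}$ was not used) are then automatically reproduced. The concern you raise in your last paragraph is therefore not addressed in the paper at all; your proposed remedy via Theorem~\ref{th:symplag} and the Hamiltonian rigidity of $A_2$ is more scrupulous than what the paper actually does, which tacitly takes this consistency for granted.
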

\begin{proof}
If we assume that $L_n=L_n(u^k,u^k_x,u^k_{xx})$ then the
expression~\eqref{eq:11} reads as
\begin{multline}
A_{2}^{ij}=K^{ip}\left( \frac{\partial L_{p}}{\partial u_{xx}^{n}}+\frac{
\partial L_{n}}{\partial u_{xx}^{p}}\right) K^{nj}\partial _{x}^{3}
\\
+K^{ip}
\left[ \frac{\partial L_{p}}{\partial u_{x}^{n}}-\frac{\partial L_{n}}{
\partial u_{x}^{p}}+3\left( \frac{\partial L_{n}}{\partial u_{xx}^{p}}
\right) _{x}\right] K^{nj}\partial _{x}^{2}  \label{eq:45} \\
+K^{ip}\left[ \frac{\partial L_{p}}{\partial u^{n}}+\frac{\partial L_{n}}{%
\partial u^{p}}-2\left( \frac{\partial L_{n}}{\partial u_{x}^{p}}\right)
_{x}+3\left( \frac{\partial L_{n}}{\partial u_{xx}^{p}}\right) _{xx}\right]
K^{nj}\partial _{x} \\
+K^{ip}\left[ \frac{\partial L_{n}}{\partial u^{p}}-\left( \frac{\partial
L_{n}}{\partial u_{x}^{p}}\right) _{x}+\left( \frac{\partial L_{n}}{\partial
u_{xx}^{p}}\right) _{xx}\right] _{x}K^{nj},
\end{multline}
Then we can plug in the above formula the expression of $L_n$ given
in~\eqref{eq:46}. Keeping in mind the general expression of a homogeneous
third-order operator~\eqref{eq:17} we have:
\begin{enumerate}
\item $G_{kn} = M_{ks}g^{sp}_2M_{pn}$ by comparing the leading coefficient, so
  that $G_{hk}$ is uniquely determined;
\item $L_{smn} = M_{sk}d^{kp}_{2m} M_{pn}$ by comparing the coefficient of
  $u^m_{xxx}$, so that $L_{smn}$ is uniquely determined;
\item $F_{pmn} = M_{ip}c^{ij}_{2m}M_{jn}$ by comparing the coefficient of
  $u^m_{xx}$ in the coefficient of $\partial_x$, so that $F_{pmn}$ is uniquely
  determined.
\end{enumerate}
The equation~\eqref{eq:48} can be rewritten as
\begin{equation}\label{eq:12}
  R_{pm,n}+R_{mn,p}+R_{np,m} =
F_{pmn} - \frac{1}{2}\left(G_{pm,n} + G_{np,m} - G_{nm,p} + 4L_{npm}\right).
\end{equation}
If we interpret $R_{pm}$ as a $2$-form, then the above equation is integrable
as the right-hand side is a closed $3$-form. As our considerations are local,
we can solve the above equation (non-uniquely) by Poincar\'e's Lemma.
\end{proof}

\begin{corollary}
The matrix $G_{ij}$ in~\eqref{eq:46} is the leading term of the symplectic
operator $B$.
\end{corollary}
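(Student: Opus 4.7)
The plan is to read off the leading coefficient of the symplectic operator $B$ directly from formula~\eqref{eq:9}, combining it with the identification of $g_2^{ij}$ in terms of $G$ that was established in Step~1 of the proof of the preceding theorem. Since everything needed is already encoded in~\eqref{eq:9} and~\eqref{eq:47}, this reduces to a one-line matrix manipulation together with careful sign- and order-bookkeeping.

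More concretely, first I would recall $B_{ij}=-M_{ip}A_{2}^{pq}M_{qj}$ from~\eqref{eq:9}, and then observe that under the ansatz~\eqref{eq:46} the preceding lemma gives the expression~\eqref{eq:47} for $A_2$, whose leading term is $K^{ip}G_{pn}K^{nj}\partial_{x}^{3}$. Sandwiching this leading term with $-M$ on both sides and using $M_{ip}K^{pa}=\delta^{a}_{i}$ and $K^{bq}M_{qj}=\delta^{b}_{j}$, the leading coefficient of $B_{ij}$ collapses to $-G_{ij}\partial_{x}^{3}$. This identifies $G_{ij}$, up to the sign convention fixed in~\eqref{eq:9}, with the leading matrix coefficient of $B$.

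As a cross-check I would also verify the statement directly from~\eqref{eq:43}. Writing $u^{m}_{x}=\varphi^{m}_{xx}$, the only contribution of $L_{n}$ in~\eqref{eq:46} to a third-order derivative is
\begin{equation*}
   L_{n} = \bigl(\tfrac{1}{2}G_{nm}+R_{nm}\bigr)\varphi^{m}_{xxx} + (\text{terms of order }\le 2\text{ in derivatives of }\varphi),
\end{equation*}
so that $\partial L_{n}/\partial\varphi^{i}_{xxx} = \tfrac{1}{2}G_{ni}+R_{ni}$. Substituting this into the $|\sigma|=3$ piece of~\eqref{eq:43}, the symmetric-in-$(i,n)$ combination that appears in the leading coefficient of $B_{in}\partial_{x}^{3}$ kills the antisymmetric $R$-part and retains $-\tfrac{1}{2}(G_{in}+G_{ni})=-G_{in}$, in agreement with the matrix computation above.

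I do not expect any genuine obstacle: the content of the corollary is simply that the two natural routes from the ansatz to the leading coefficient of $B$ — via the conjugation formula~\eqref{eq:9} and via the potential/Helmholtz expression~\eqref{eq:43} — must agree. The only points requiring care are tracking signs through the definition of $B$ and accounting for the derivative-order shift induced by the potential substitution $u^{k}=\varphi^{k}_{x}$, which moves the highest derivative of $L_{n}$ from $u^{k}_{xx}$ to $\varphi^{k}_{xxx}$; both are routine.
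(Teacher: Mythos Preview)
Your proposal is correct and matches the paper's intended argument: the paper states the corollary without proof, as it follows immediately from Step~1 of the preceding theorem (which gives $G_{kn}=M_{ks}g_2^{sp}M_{pn}$) combined with the definition~\eqref{eq:9} of $B$. Your additional cross-check via~\eqref{eq:43} is sound and provides a nice independent verification, though it is not needed.
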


\begin{remark}
  In all our concrete examples (WDVV systems) the integrability requirement on
  the tensor $T$~\eqref{eq:21} is passed. We \emph{conjecture} that this
  condition is a consequence of the fact that one can always obtain a minimal
  order potential $\psi$ for any symplectic operator $B$. This conjecture goes
  along the lines of existing conjecture on minimal order Lagrangians for
  locally variational PDEs, see~\cite{hand} and references therein.
\end{remark}

\section{Lagrangian representation of WDVV equations}
\label{sec:lagr-repr-wdvv}

The main examples of bi-Hamiltonian systems of the above type that we have in
mind are the Witten--Dijkgraaf--Verlinde--Verlinde (WDVV) equations. The WDVV
system arises in $2D$ topological field theory \cite{Dub1,Dub2b} as the
associativity condition of an algebra in an $N$-dimensional space.  This system
is an overdetermined system of third-order PDEs in one unknown function
$f$. The form of the equations also depend on the choice of a nondegenerate
scalar product $\eta_{\alpha\beta}$ on $\mathbb{R}^N$.

The WDVV equations on an $N$-dimensional space can also be presented in the
form of $N-2$ hydrodynamic-type systems in $N(N-1)/2$ components. Such systems
are mutually commuting and non-diagonalizable (see\cite{OM98} and references
therein).  In the first non-trivial case $N=3$ it has been showed that for
three distinct choices of $\eta_{\alpha\beta}$ such systems of PDEs are
bi-Hamiltonian by a pair of local operators of the type \eqref{eq:192} and
\eqref{eq:17} (see \cite{FGMN,KN1,KN2}). The same has been proved more recently
for two systems in the case $N=6$ \cite{pv}.

In this section we will find the Lagrangian representation of the WDVV system
from \cite{FGMN}. As a by-product we will find a structure formula for the
third-order operator $A_2$ that has an independent interest.

Given a function $F=F(t^1,\ldots,t^N)$ we assume that
\begin{equation}
\eta _{\alpha \beta }=\frac{\partial ^{3}F}{\partial t^{1}\partial t^{\alpha
}\partial t^{\beta }}  \label{eq:1}
\end{equation}
is a constant nondegenerate symmetric matrix ($\eta^{\alpha \beta }$ will
denote its inverse matrix); the WDVV equations are equivalent to the
requirement that the functions
\begin{equation}
c_{\beta \gamma }^{\alpha }=\eta ^{\alpha \mu }
\frac{\partial ^{3}F}{\partial t^{\mu }\partial t^{\beta }\partial t^{\gamma }}
\end{equation}
are the structure constants of an associative algebra. Then the
associativity condition reads as
\begin{equation}
\eta^{\mu \lambda }
\frac{\partial ^{3}F}{\partial t^{\lambda }\partial t^{\alpha }\partial t^{\beta }}
\frac{\partial ^{3}F}{\partial t^{\nu}\partial t^{\mu }\partial t^{\gamma }}
=
\eta^{\mu \lambda }
\frac{\partial^{3}F}
{\partial t^{\nu }\partial t^{\alpha }\partial t^{\mu }}
\frac{\partial^{3}F}{\partial t^{\lambda }\partial t^{\beta }\partial t^{\gamma }}
\end{equation}

The integrability of the above equations was proved in \cite{Dub2b} by giving a
Lax pair for all values of $N$ and $\eta_{\alpha \beta }$. The Hamiltonian
geometry of WDVV equations also attracted the interest of a number of
researchers \cite{FGMN,KKVV,OM98}. We will focus on \cite{FGMN}, where the case
$N=3$ was considered with $\eta$ antidiagonal identity, \emph{i.e.} $\eta
_{\alpha \beta }=\delta _{\alpha +\beta ,4}$. In this case $F$ is of the form
$F=\frac{1}{2}(t^{1})^{2}t^{3}+\frac{1}{2} t^{1}(t^{2})^{2}+f(t^{2},t^{3})$,
and the WDVV system consists of the single equation (after setting $ x=t^{2}$,
$t=t^{3}$)
\begin{equation}
f_{ttt}=f_{xxt}^{2}-f_{xxx}f_{xtt}.  \label{eq:4}
\end{equation}
The above equation is a third order Monge-Amp\`ere equation (see \cite{mm} for
recent results).

Let us introduce the new variables $a^{1}=a=f_{xxx}$, $a^{2}=b=f_{xxt}$,
$a^{3}=c=f_{xtt}$. Then the compatibility conditions for the WDVV equation
can be written as an hydrodynamic type system of PDEs
$a_{t}^{i}=v_{j}^{i}(\mathbf{a})a_{x}^{j}$ where
\begin{equation}
a_{t}=b_{x}, \quad b_{t}=c_{x}, \quad c_{t}=(b^{2}-ac)_{x}.  \label{eq:5}
\end{equation}

It was proved in \cite{FGMN} that the above system can be written as a
Hamiltonian system in two ways:
\begin{equation}
a_{t}^{i}=A_{1}^{ij}\fd{H_2}{a^j}=A_{2}^{ij}\fd{H_{1}}{a^j}  \label{eq:6}
\end{equation}%
with respect to two compatible local Hamiltonian operators $A_{1}$ and
$A_{2}$, with expressions
\begin{gather}
A_{1}=%
\begin{pmatrix}
-\frac{3}{2}\partial _{x}^{{}} & \frac{1}{2}\partial _{x}^{{}}a & \partial
_{x}^{{}}b \\
\frac{1}{2}a\partial _{x}^{{}} & \frac{1}{2}(\partial _{x}^{{}}b+b\partial
_{x}^{{}}) & \frac{3}{2}c\partial _{x}^{{}}+c_{x} \\
b\partial _{x}^{{}} & \frac{3}{2}\partial _{x}^{{}}c-c_{x} &
(b^{2}-ac)\partial _{x}^{{}}+\partial _{x}^{{}}(b^{2}-ac)%
\end{pmatrix}
\\
A_{2}=%
\begin{pmatrix}
0 & 0 & \partial _{x}^{3} \\
0 & \partial _{x}^{3} & -\partial _{x}^{2}a\partial _{x} \\
\partial _{x}^{3} & -\partial _{x}a\partial _{x}^{2} & \partial
_{x}^{2}b\partial _{x}+\partial _{x}b\partial _{x}^{2}+\partial
_{x}a\partial _{x}a\partial _{x}%
\end{pmatrix}%
\end{gather}%
and Hamiltonian densities $h_{2}=c$, $h_{1}=-\frac{1}{2} a(\partial
_{x}^{-1}b)^{2}-(\partial _{x}^{-1}b)(\partial _{x}^{-1}c)$, respectively.  The
two Hamiltonian operators $A_{1}$ and $A_{2}$ are homogeneous (see
\cite{DN,DN2} and the Introduction for more details).

The observation that led to finding
$A_{1}$ was that in the Lax pair of the system \eqref{eq:5}
\begin{equation}
  \label{eq:15}
  \psi_x = \lambda
  \begin{pmatrix}
    0 & 1 & 0\\ b & a & 1 \\ c & b & 0
  \end{pmatrix}
  \psi,\qquad
  \psi_t = \lambda
  \begin{pmatrix}
    0 & 0 & 1\\ c & b & 0 \\ b^2-ac & c & 0
  \end{pmatrix}
  \psi
\end{equation}
the eigenvalues $u^{k}(\mathbf{a})$ of the matrix that gives the $x$-evolution
are conservation law densities. If the system is rewritten using the above
eigenvalues as new dependent variables $u^{k}$, \emph{i.e.}, using the point
transformation
\begin{equation}
  a=u^1+u^2+u^3,
\quad
  b=-\frac{1}{2}(u^1u^2+u^2u^3+u^3u^1),
  \quad
  c=u^1u^2u^3
\end{equation}
the operator $A_{1}$ becomes evident and is of the type
$A_{1}^{ij}=K^{ij}\partial _{x}^{{}}$, where
\begin{equation*}
K=\frac{1}{2}\left(
\begin{array}{ccc}
1 & -1 & -1 \\
-1 & 1 & -1 \\
-1 & -1 & 1%
\end{array}%
\right),
\end{equation*}
and the Hamiltonian is $\mathcal{H}_1=u^1u^2u^3$.  In these new coordinates our
system~\eqref{eq:5} takes the form
\begin{equation}
  \label{eq:3}
    u^i_t=\frac{1}{2}(u^ju^k-u^iu^j-u^iu^k)_x
\end{equation}
where $i$, $j$, $k$ are three distinct indices.

The operator $A_{2}$ for the system \eqref{eq:5} was found in a
completely different way. More precisely, a Lagrangian for the $x$-derivative
of the WDVV equation \eqref{eq:4} was found, and a symplectic representation of
this equation was achieved in \cite{FGMN}. Then $A_{2}$ was found by
inverting a corresponding symplectic form and multiplying it by
$A_{1}$. It is necessary to emphasize that the coordinates $a$, $b$, $c$ (see
\eqref{eq:5}) are Casimirs for $A_{2}$, \emph{i.e.} $A_{2}$ is in the canonical
form~\eqref{casimir} in these coordinates, and the inverse matrix of the
leading term $g^{ik}$ is the Monge metric \cite{fpv}
\begin{equation}
g_{ij}=
\begin{pmatrix}
-2b & a & 1 \\
a & 1 & 0 \\
1 & 0 & 0%
\end{pmatrix}
\label{eq:7}
\end{equation}

At this point we would like to stress that \emph{flat coordinates $u^i$ of the
  first operator $A_1$ make the expression of the second operator $A_2$ much
  more complicated with respect to the initial coordinates $a^i$, since the
  latter are Casimirs of $A_2$}. On the other hand, our structure
formula~\eqref{eq:47} connects the operator $A_2$ with conservation law
densities of our system, with the leading coefficient of the symplectic
operator and with a newly introduced skew-symmetric tensor $R_{mn}$, and all
these quantities are computable in terms of flat coordinates of the first
operator, thus leading to the Lagrangian representation~\eqref{eq:13}.  This
means that we have to change coordinates to the operator $A_2$. To this aim we
use the following formula:
\begin{equation}
  \label{eq:16}
  A_2^{ij}(\mathbf{u})=\frac{\partial u^{i}}{\partial a^{n}}
   A^{nm}_2(\mathbf{a})\frac{\partial u^{j}}{\partial a^{m}},
\end{equation}
The leading coefficient of $A_2$ is the following contravariant
metric $g^{ij}(\mathbf{u})$
\begin{align}
  g^{ii} & =
    \frac{3{u^i}^2+{u^j}^2+{u^k}^2
    -3u^iu^j-3u^iu^k+u^ju^k}
    {(u^i-u^j)^2(u^i-u^k)^2}\notag
  \\
    & = \frac{(u^i-u^j)^2 + (u^i-u^k)^2 + (u^j-u^k)^2
      +4(u^i-u^j)(u^i-u^k)}{(u^i-u^j)^2(u^i-u^k)^2},
  \\
    g^{ij} & = \frac{-1}{(u^i-u^j)^2},
\end{align}
where $i$, $j$, $k$ are three pairwise distinct indexes.
The corresponding covariant metric is
\begin{align}
  g_{ii} & =
  \frac{-3(u^j-u^k)^2}{4},
  \\
  g_{ij} & =
  \frac{1}{4}(-2{u^i}^2-2{u^j}^2-3{u^k}^2
  +u^iu^j+3u^iu^k+3u^ju^k)\notag
  \\
  & = -\frac{1}{4}((u^i-u^j)^2 + (u^i-u^k)^2 + (u^j-u^k)^2
      + (u^i-u^j)(u^i-u^k)),
\end{align}
where $i$, $j$, $k$ are three pairwise distinct indexes.

The leading term $G_{ij}(\mathbf{u})$ of the symplectic operator is
\begin{align}
G_{ii}&=\frac{
(u^{i})^{2}+(u^{j})^{2}+(u^{k})^{2}-u^{i}u^{j}-u^{i}u^{k}-u^{j}u^{k}}{%
(u^{i}-u^{j})^{2}(u^{i}-u^{k})^{2}}\notag   \\
&=\frac{(u^i-u^j)^2 + (u^i-u^k)^2 + (u^j-u^k)^2}
{2(u^{i}-u^{j})^{2}(u^{i}-u^{k})^{2}},  \label{eq:51}
\\
G_{ij} & =\frac{(u^{i})^{2}+(u^{j})^{2}-(u^{k})^{2}+u^{i}u^{k}+u^{j}u^{k}-3u^{i}u^{j}}
{(u^{i}-u^{j})^{2}(u^{i}-u^{k})(u^{j}-u^{k})} \notag
\\
& = \frac{(u^i-u^j)^2 + (u^i-u^k)^2 + (u^j-u^k)^2 - 4(u^k-u^i)(u^k-u^j)}
{2(u^{i}-u^{j})^{2}(u^{i}-u^{k})(u^{j}-u^{k})},
\end{align}%
where $i$, $j$, $k$ are a triplet of distinct indices. Note that
\begin{equation}
\det \mathbf{G}=-\frac{16}{%
(u^{1}-u^{2})^{2}(u^{1}-u^{3})^{2}(u^{2}-u^{3})^{2}},  \label{eq:52}
\end{equation}%
and from $G_{11}>0$ (outside obvious singularities) it follows that the
signature of the metric is $(2,1)$.  It is also remarkable that the metric
$G_{ij}$ has constant sectional curvature $-1/16$. The corresponding
contravariant metric is
\begin{align*}
  G^{ii}&=-\frac{1}{4}(u^i-u^j)(u^i-u^k),
  \\
  G^{ij}&=-\frac{1}{4}(u^i-u^j)^2.
\end{align*}

The expressions of $L_{ijk}=L_{ikj}$ are:
\begin{enumerate}
\item when $j\neq 1$ and $k\neq 1$ or when $j=k=1$
\begin{equation}
  \label{eq:18}
  L_{1jk}=\frac{((u^1-u^2)+(u^1-u^3))(u^{a}-u^{b})(u^{c}-u^{d})}
        {2(u^1-u^2)^3(u^1-u^3)^3}
\end{equation}
where $(a,j,b)$, $(c,d,k)$ are triplets of distinct indices with $a<b$, $c<d$;
\item when $(1,j,k)$ are a triplet of distinct indices.
\begin{equation}
  \label{eq:190}
  L_{11k}= - \frac{(u^{1}-u^{k})^2+(u^{1}-u^{j})^2}
        {2(u^1-u^{k})^2(u^1-u^{j})^3}
\end{equation}
\item when $i\neq 1$ the expressions of $L_{ijk}$ are obtained by a cyclic
  permutation of the above expressions.
\end{enumerate}

\begin{remark}
  The conserved quantities $\mathcal{L}%
  _{n}=-\frac{1}{2}L_{nmp}u_{x}^{m}u_{x}^{p}$ coincide with the conserved
  quantities $\mathcal{I}_{1}$, $\mathcal{I}_{2}$, $\mathcal{I}_{3}$ defined in
  the paper \cite{FGMN}: $\mathcal{L}_{n}=\mathcal{I}_{n}$.
\end{remark}

\begin{remark}
  There is a further conservation law density in the form $u^iL_i$; however,
  this density is trivial as it is in the form of a total divergence.
\end{remark}

The integrability condition is fulfilled in our case as $T_{pmn}$ is a
completely skew-symmetric tensor, and by dimensional reasons. The single
equation for $R_{mn}$ is
\begin{equation*}
\left( \pd{R_{31}}{u^2}+\pd{R_{23}}{u^1}+\pd{R_{12}}{%
u^3}\right) (u^{1}-u^{2})(u^{1}-u^{3})(u^{2}-u^{3})+1=0.
\end{equation*}%
in the unknown functions $R_{23}$, $R_{13}=-R_{31}$, $R_{12}$ of the
coordinates $u^{1}$, $u^{2}$, $u^{3}$. This equation can be regarded as an
equation of the form $dR=-1/((u^{1}-u^{2})(u^{1}-u^{3})(u^{2}-u^{3}))du^{1}%
\wedge du^{2}\wedge du^{3}$, where $R=R_{12}du^{1}\wedge
du^{2}+R_{13}du^{1}\wedge du^{3}+R_{23}du^{2}\wedge du^{3}$. One
distinguished solution can be obtained by letting $\pd{R_{31}}{u^2}=%
\pd{R_{23}}{u^1}=\pd{R_{12}}{u^3}$. In this case we have
\begin{equation}
R_{ij}^{0}=-\frac{1}{3}\left( \frac{1}{(u^{i}-u^{j})(u^{i}-u^{k})}-\frac{1}{%
(u^{j}-u^{i})(u^{j}-u^{k})}\right)  \label{eq:53}
\end{equation}%
where $i$, $j$, $k$ are distinct indices.

It is possible to cross-check the above results by a direct calculation of the
quantity $L_\tau A_1$ aimed at verifying that $L_\tau A_1=A_2$. This has been
done using the correspondence between multivectors and superfunctions
in~\cite{getz,IgoninVerbovetskyVitolo:VMBGJS} and the corresponding Schouten
bracket formula from~\cite{KerstenKrasilshchikVerbovetsky:HOpC}. The result was
confirmed after 18 hours of computation with CDE, using 7.7GB of RAM, on the
workstation \texttt{sophus2} of the Department of Mathematics and Physics of
the Universit\`a del Salento.

\section{Conclusions}

A bi-Hamiltonian structure for WDVV equation with different $\eta _{\alpha
  \beta }$ in the case $N=3$ also has been considered. For instance, in
\cite{KN1} a different identification of the variables $t^{2}$ and $t^{3}$ as
$t$ and $x$ leads to different WDVV equations and different bi-Hamiltonian
formulations through local Dubrovin--Novikov operators. Moreover, in \cite{KN2}
another choice of constants in $\eta _{\alpha \beta }$ was investigated. Its
third order operator of \cite{KN2} lies in a different class with respect to
the third order operator of \cite{FGMN}, with respect to the classification in
\cite{fpv}. Since in the above cases different choices of $\eta$ can be
connected by transformations of independent variables to the choice of the WDVV
equation that we considered~\eqref{eq:4}, we expect that the Lagrangian
representation can be effectively achieved by the techniques exposed in this
paper.

We stress that Hamiltonian operators for the $N=3$ WDVV equation~\eqref{eq:4}
in the original unknown $f$ have been found in \cite{KKVV}. However, they are
nonlocal and depend explicitly on independent variables and it is not clear
(and a nontrivial problem) how to relate them to the operator in \cite{FGMN}.

When $N=4$ in~\eqref{eq:1} and choosing $\eta_{\alpha\beta}$ to be the
antidiagonal identity we obtain a system of equations in third-order
derivatives of $f$ that can be rewritten as two commuting hydrodynamic-type
systems in $6$ unknown functions. These systems have been recently proved to be
bi-Hamiltonian \cite{pv}, and a Lagrangian representation have been computed
with the above methods. However, its expression is more complicated and less
`regular' than the above expressions for $N=3$. For this reason we did not
decide to write down the expressions, even if we will make them available upon
request. Here the dimension of the problem does not automatically imply the
integrability condition $dT=0$, and the fact that this condition holds is
remarkable.

We stress that our methods might be carried out for an arbitrary homogeneous
operator $A_2$ with no conceptual differences with the third-order case.

\section*{Acknowledgements}

We would like to thank B.A. Dubrovin, E.V. Ferapontov, P. Lorenzoni and
G.V. Potemin for many interesting discussions. Thanks are due to A.C. Norman
for his support with Reduce. We also thank A. Falconieri for his technical
support with the workstation \texttt{sophus2} of the Dipartimento di Matematica
e Fisica of the Universit\`a del Salento. Finally, we acknowledge the anonymous
Referee for his/her comments that helped us to clarify our results.

MVP and RFV were partially supported by GNFM of the Istituto Nazio\-nale di
Alta Matematica, the PRIN ``Geometric and analytic theory of Hamiltonian
systems in finite and infinite dimensions'', Istituto Nazio\-nale di Fisica
Nucleare (INFN) -- IS CSN4 \emph{Mathematical Methods of Nonlinear Physics} and
the Dipartimento di Matematica e Fisica \textquotedblleft E. De
Giorgi\textquotedblright\ of the Universit\`{a} del Salento.

MVP's work was also partially supported by the grant of Presidium of RAS
``Fundamental Problems of Nonlinear Dynamics'' and by the RFBR grant RFBR grant
14-01-00012.

\end{document}